\numberwithin{equation}{section}
\numberwithin{table}{section}
\DeclareMathOperator{\Gal}{Gal}
\theoremstyle{definition}
\newtheorem{definition}{Definition}[section]
\newtheorem{example}[definition]{Example}
\newtheorem{remark}[definition]{Remark}
\newtheorem*{remark*}{Remark}
\theoremstyle{plain}
\newtheorem{theorem}[definition]{Theorem}
\newtheorem{lemma}[definition]{Lemma}
\newtheorem{proposition}[definition]{Proposition}
\newcommand{\F}{{ \mathbb F }}
\newcommand{\vF}{{ \mathbb F }}
\newcommand{\N}{{ \mathbb N }}
\newcommand{\vN}{{ \mathbb N }}
\newcommand{\vZ}{{ \mathbb N }}
\newcommand{\vQ}{{ \mathbb Q }}
\newcommand{\Q}{{ \mathbb Q }}
\newcommand{\Z}{{ \mathbb Z }}
\newcommand{\R}{{ \mathbb R }}
\newcommand{\p}{{ \mathfrak p }}
\renewcommand{\O}{{ \mathcal O }}
\author[A. Ferraguti]{Andrea Ferraguti}
\address{DICATAM, University of Brescia, via Branze 43, I-25123 Brescia}
\email{andrea.ferraguti@unibs.it}
\author[D. Goldfeld]{Dorian Goldfeld}
\address{Columbia University\\
2990 Broadway\\
New York, NY 10027 USA}
\email{goldfeld@columbia.edu}
\author[G. Micheli]{Giacomo Micheli}
\address{University of South Florida\\
4202 E Fowler Ave\\
33620 Tampa, US.
}
\email{gmicheli@usf.edu}
\title{Number Theoretical Locally Recoverable Codes}
\keywords{}
\subjclass[2020]{11T06}
\begin{document}

\thanks{Dorian Goldfeld is partially supported by Simons Collaboration Grant  567168.\\ Giacomo Micheli was supported in part by the National Science Foundation under Grant No 2127742.
}
\maketitle

{\flushright \footnotesize \emph{In honor of Joachim Rosenthal's 60th birthday.}\par}

\begin{abstract}
In this paper we give constructions for infinite sequences of finite non-linear locally recoverable codes $\mathcal C\subseteq \prod\limits^N_{i=1}\mathbb F_{q_i}$  over a product of finite fields arising from basis expansions in algebraic number fields.
The codes in our sequences have increasing length and size, constant rate, fixed locality, and minimum distance going to infinity.
\end{abstract}

\section{Introduction}

There has been a lot of interest recently in  Locally Recoverable Codes (LRC) \cite{barg2017locally,bartoli2020locally,dukes2022optimal,
freij2016locally,liu2018new,micheliIEEE,silberstein2013optimal,tamo2014family}, which are linear codes that allow local recovery of erasures. More specifically, they allow recovery of simultaneous erasures exactly as commonly used $k$-dimensional codes do (e.g.\ Reed-Solomon codes) by looking at the entries of a codeword that correspond to an information set (i.e.\ $k$ other components where no erasure happened), but they also allow recovery of a single erasure by looking at fewer nodes than $k$.

The applications in which LRC thrive are related to distributed storage and cloud storage systems because they easily allow the recovery of the data in a single failed server or hard-drive (which in this context we simply call \emph{node}), but they also allow recovery of more serious failures, such as simultaneous failures of multiple nodes in the system.

In this paper we construct Locally Recoverable Codes using Number Fields. The construction is inspired by Tamo and Barg's ideas in \cite{tamo2014family}, used in combination with the framework defined by Guruswami in \cite{guruswami}.  The technique to construct our codes is purely number theoretical, but the constructed codes are subsets of a product of finite fields (so they are practical to implement). We believe that  having LRCs over products of different finite fields  is an interesting feature as it provides more flexibility in the design of a system: for example, we might want to increase the length of these codes without having to enlarge the base field for all the components of every codeword, as one would be forced to do in the context of Reed-Solomon codes when one wants to extend the length of the code beyond the size of the finite field. In fact, with our construction, we can simply take additional reductions of elements of $\mathcal O_K$ modulo other prime ideals of $\mathcal O_K$.
Of course, our codes are not linear because of the nature of a product of finite fields and the fact that $(\mathcal O_K,+)$ does not have an $\vF_q$-linear structure for any $q$. Nevertheless, they allow efficient local recovery and have good minimum distance. 

In addition, we can construct a family of codes such that the distance grows linearly with length and dimension, and the asymptotic rate (see Definition \ref{good1}) can be made constant. From the methodology standpoint, our construction builds up new interactions between analytic number theory and coding theory. These new interactions stem from the idea that one can look at any integer $M$ as a constant function from the set of primes 
$\{\mathfrak p_i\}_{i\in \{1,\dots, n\}}$ 
of the ring of integers  lying over a totally split rational prime $p$, to 
$\vF_p$ simply by mapping each $\mathfrak p_i$ to the reduction of $M$ modulo $\mathfrak p_i$. Notice that since $\mathfrak p_i$ is totally split, $M \mod \mathfrak p_i=M\mod  p$, independently of $i$. This simple idea allows to build the locality sets, which correspond to totally split primes of the number field.

\section{Background on Coding Theory}

Let $n$ be a positive integer and $F_1,\dots F_n$ be finite fields ordered by increasing size.
We define a code $\mathcal C$ in $R_m=\prod\limits^m_{i=1} F_i$ as a subset of $R_m$.
The distance between two codewords $x,y\in \mathcal C$ is the number of indexes $i\in \{1,\dots n\}$ such that $x_i\neq y_i$.
The minimum distance of $\mathcal C$ is the minimal positive integer $d$ such that there exist two elements $x,y \in \mathcal C$ such that $d(x,y)=d$. If $X$ is a set, let us denote the powerset of $X$ as $2^X$.

For every $m\in \vN$, let $\mathcal C_m \in 2^{R_m}$ be a choice of a code of $R_m$. For any set $A$, let us denote by $\# A$ the cardinality of $A$.
We say that a sequence of codes $\{\mathcal C_m\}_{m\in \N}$ is \emph{almost good} if 
\begin{equation}\label{good1}
\liminf_{m\rightarrow \infty} \frac{\log (\# \mathcal C_m)}{\log (\# R_m)}=\gamma>0 
\end{equation}

and 

\begin{equation}\label{good2}
\liminf_{m\rightarrow \infty}d(\mathcal C_m)=+\infty. \end{equation}

Notice that the choice of the basis of the logarithm clearly does not affect the definition. Moreover, observe that $\frac{\log (\# \mathcal C_m)}{\log (\# R_m)}$ is the natural generalization of the concept of information rate in the non-linear setting, since the dimension of a non-linear code $\mathcal C$ is replaced by $\log(\#\mathcal C)$ and the dimension of the ambient space is $\log (\# R_m)$.

Also, notice the difference between our definition and the usual definition of a good family of codes, where the ratio between distance and length is required to converge to a constant. In our case, since the distance and locality are not weighted by how large the finite fields we are using are, the standard definition of good codes carries an inherent disadvantage that is essentially unavoidable. For this reason, we do keep track of the size of the code vs the size of the full space but for simplicity we avoid weighting the distance $d$ depending on the finite fields where the components belong, and only require $d\rightarrow \infty$ (even though in our case the growth is linear, which is what happens with optimal codes).

We say that a code $\mathcal C\subseteq R_m$ has locality $r$ if for any $i\in\{1,\dots,n\}$ it is possible to reconstruct the $i$-th component of a codeword $c\in \mathcal C$ by knowing at most $r$ other components of $c$. In other words, there is an algorithm (depending on $\mathcal C$)  that takes as input the location $i$ of an erasure together with $r$ other coordinates of $c$ and outputs the missing component of $c$.

\section{Background on Number Fields}

Let $K/\Q$ be a number field of degree $\delta$. Recall that if $\beta\in K$, the \emph{norm} of $\beta$, denoted by $N(\beta)$, is the determinant of the $\Q$-linear map $K\to K$ defined by $x\mapsto \beta x$. Let $\O_K$ be the ring of integers of $K$, and let $\alpha\in \O_K$ be an element that generates $K$, i.e.\ such that $\vQ(\alpha)=K$. Let
$$m_\alpha(x)=b_0+b_1x+\ldots+b_{\delta-1}x^{\delta-1}+x^\delta\in \Z[x]$$
be the minimal polynomial of $\alpha$ over $\Q$, and let $S\coloneqq \max\{|b_i|\colon i\in \{0,\ldots,\delta-1\}\}$, where $|\cdot|$ denotes the usual archimedean absolute value.
For any prime ideal $\p$ of $\mathcal O_K$, let $\vF_{\p}$ be the field $\mathcal O_K/\p$.

Recall that a prime $p\in \vZ$ is \emph{totally split} in $K/\vQ$ if $p\mathcal O_K$ factors as $\prod\limits^{[K:\vQ]}_{i=1} \p_i$, where the $\p_i$'s are pairwise distinct prime ideals of $\mathcal O_K$ and $[\vF_{\p_i}:\vF_p]=1$ for all $i\in \{1,\dots, \delta\}$.

\begin{lemma}\label{norm_bound}
With the notation above, let $y=\sum\limits^{\delta-1}_{i=0}z_i\alpha^i\in \O_K$, with $z_i\in \Z$ and $|z_i|< M$ for every $i$. Then $|N(y)|\leq \delta^{\delta/2}(1+S)^{(\delta-1)\delta/2}(M-1)^\delta$.
\end{lemma}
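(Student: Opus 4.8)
The plan is to bound $|N(y)|$ by realizing the norm as a determinant and then estimating that determinant. Recall that $N(y)$ is the determinant of the $\mathbb{Q}$-linear multiplication map $x \mapsto yx$ on $K$, computed in any $\mathbb{Q}$-basis of $K$. The natural choice here is the power basis $1, \alpha, \ldots, \alpha^{\delta-1}$, since $y$ is given to us as an integer combination of the $\alpha^i$. With respect to this basis the map is represented by a $\delta \times \delta$ integer matrix $A$ whose columns are the coordinate vectors of $y, y\alpha, \ldots, y\alpha^{\delta-1}$ in the power basis, and $|N(y)| = |\det A|$.

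First I would set up the recursion for reducing powers of $\alpha$ via the minimal polynomial: $\alpha^\delta = -(b_0 + b_1\alpha + \cdots + b_{\delta-1}\alpha^{\delta-1})$, so multiplication by $\alpha$ acts on coordinate vectors by the companion matrix of $m_\alpha$, whose entries are the $b_i$ (bounded by $S$) together with $0$'s and $1$'s. Starting from $y = \sum z_i \alpha^i$ with $|z_i| \le M-1$, I would track how the sup-norm of the coordinate vector of $y\alpha^j$ grows as $j$ increases from $0$ to $\delta-1$: each multiplication by $\alpha$ replaces a coordinate vector by its image under the companion matrix, and a term-by-term estimate shows the coordinate sup-norm of $y\alpha^j$ is at most $(1+S)^j (M-1)$, or more crudely $(1+S)^{\delta-1}(M-1)$ for every $j \le \delta-1$. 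Thus every entry of $A$ is bounded in absolute value by $(1+S)^{\delta-1}(M-1)$.

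Then I would apply Hadamard's inequality: for a $\delta \times \delta$ matrix $A$ with columns $v_1, \ldots, v_\delta$, one has $|\det A| \le \prod_j \|v_j\|_2 \le \prod_j \sqrt{\delta}\,\|v_j\|_\infty$. With each $\|v_j\|_\infty \le (1+S)^{\delta-1}(M-1)$ this gives $|N(y)| \le \delta^{\delta/2} \big((1+S)^{\delta-1}(M-1)\big)^\delta = \delta^{\delta/2}(1+S)^{(\delta-1)\delta}(M-1)^\delta$, which is even a bit stronger than the claimed bound, so the stated exponent $(\delta-1)\delta/2$ on $(1+S)$ certainly holds. (If one wants exactly the stated exponent, one should instead track that the coordinates of $y\alpha^j$ grow geometrically and multiply the resulting per-column bounds $(1+S)^j(M-1)$ for $j = 0, \ldots, \delta-1$, whose product contributes $(1+S)^{0+1+\cdots+(\delta-1)} = (1+S)^{(\delta-1)\delta/2}$ — this is the argument I would actually write up.)

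The main obstacle is the second approach's bookkeeping: one must verify carefully that multiplying the coordinate vector of $y\alpha^j$ by the companion matrix multiplies its sup-norm by at most $(1+S)$ — the subtlety is that the bottom row of the companion matrix has the $-b_i$ entries, so the new last coordinate is $\sum_i b_i(\text{old coordinates})$, bounded by $\delta S \cdot (\text{old sup-norm})$ rather than $(1+S)\cdot(\text{old sup-norm})$, and one needs a slightly more careful inductive hypothesis (e.g.\ tracking $\ell^1$-norms, or absorbing the $\delta$ factor into the $\delta^{\delta/2}$ term) to land exactly on the stated constant. I expect the cleanest route is in fact the Hadamard computation above combined with a clean induction showing every entry of $A$ is at most $(1+S)^{\delta-1}(M-1)$, and then observing the stated inequality follows a fortiori; but presenting the sharper per-column estimate recovers the exact exponent $(\delta-1)\delta/2$.
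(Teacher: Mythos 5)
Your core approach — represent $N(y)$ as $\det A_y$ in the power basis, track coordinate growth under repeated multiplication by $\alpha$, and apply Hadamard with the per-column bound $\|C_j\|_\infty \le (M-1)(1+S)^{j-1}$ — is exactly the paper's proof, and the parenthetical ``argument I would actually write up'' is the one that appears there.

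Two corrections are worth flagging. First, your claim that the crude bound $\delta^{\delta/2}(1+S)^{(\delta-1)\delta}(M-1)^\delta$ is ``stronger than the claimed bound, so the stated exponent certainly holds'' is backwards: since $1+S\ge 1$, we have $(1+S)^{(\delta-1)\delta}\ge(1+S)^{(\delta-1)\delta/2}$, so the crude estimate is a \emph{weaker} upper bound and does not imply the stated lemma. The per-column Hadamard estimate is therefore not optional; it is what delivers the claimed constant. Second, the ``main obstacle'' you raise is a phantom. When coordinate vectors are written as columns and multiplication by $\alpha$ is applied, the $b_i$'s occupy the last \emph{column} of the companion matrix, not the last row: if $w=\sum w_i\alpha^i$ then
\[
\alpha w = -b_0 w_{\delta-1} + \sum_{i=1}^{\delta-1}\bigl(w_{i-1}-b_i w_{\delta-1}\bigr)\alpha^i,
\]
so each new coordinate involves at most one old coordinate plus one $b_i$ times the single old coordinate $w_{\delta-1}$. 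The sup-norm therefore grows by a factor of at most $1+S$, with no extra $\delta$ appearing, and the induction $\|y\alpha^{j-1}\|_\infty\le(M-1)(1+S)^{j-1}$ closes cleanly — this is precisely the bookkeeping the paper carries out.
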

\begin{proof}
Clearly we can assume that $M>1$, as otherwise the claim is trivial. The set $\mathcal B\coloneqq \{1,\alpha,\ldots,\alpha^{\delta-1}\}$ is a $\Q$-basis of $K$ by assumption. Notice that if $w\coloneqq \sum\limits_{i=0}^{\delta-1}w_i\alpha^i\in \O_K$, with $w_i\in \Z$ for every $i$, then
\begin{align*}
w\cdot \alpha= & \sum_{i=0}^{\delta-1}w_i\alpha^{i+1} \\
			 = & \sum_{i=1}^{\delta-1}w_{i-1}\alpha^{i} + w_{\delta-1} \alpha^\delta.
\end{align*}

Using now that $m_\alpha(\alpha)=0$ we get that
\begin{equation}\label{coefficients}
w\cdot \alpha= -b_0w_{\delta-1}+\sum_{i=1}^{\delta-1}(w_{i-1}-b_iw_{\delta-1})\alpha^i.
\end{equation}

Let $A_y$ be the multiplication-by-$y$ matrix with respect to the basis $\mathcal B$, where the elements of $K$, expressed in the basis $\mathcal B$ are considered as column vectors. 
More precisely, $A_y$ is the matrix that makes the following diagram commutative
\[\begin{CD}
K @> y \cdot >> K\\
@VV \iota V @VV \iota V\\
\vQ^{\delta} @> A_y \cdot >> \vQ^{\delta}
\end{CD}
\]
where $\iota$ is the usual isomorphism of vector spaces that sends an element of $K$ into its expression in the basis $\mathcal B$. We claim that entries in the $j$-th column of $A_y$ are bounded, in absolute value, by $(M-1)(1+S)^{j-1}$. For $j=1$ this is obvious since entries in the first column are the coefficients of $y\cdot 1=y$ with respect to $\mathcal B$. Now suppose that the claim is true for the $j$-th column and let us prove it for the $(j+1)$-th. The $j$-th column is given by the result of the multiplication \[y\cdot \alpha^{j-1}=\sum_{i=0}^{\delta-1} c_i\alpha^i,\] with $|c_i|\leq (M-1)(1+S)^{j-1}$ for every $i\in \{0,\ldots,\delta-1\}$ by the inductive hypothesis. Now let us consider the $(j+1)$-th columns, given by the multiplication \[y\cdot \alpha^j=\sum_{i=0}^{\delta-1}d_i\alpha^i\] with $d_0,\ldots,d_{\delta-1}\in \Z$. Since $y\cdot \alpha^j=(y\cdot \alpha^{j-1})\cdot \alpha$, Equation \eqref{coefficients} and the inductive hypothesis show that $|d_0|\leq |b_0c_{\delta-1}|\leq S(1+S)^{j-1}$ and $|d_i|\leq |c_{i-1}|+|b_ic_{n-1}|\leq (M-1)(1+S)^j$. The claim follows since $S(1+S)^{j-1}\leq (M-1)(1+S)^j$ for every $j\geq 0$.

Now the bound on $|N(y)|=|\det A_y|$ follows from Hadamard's inequality, which states that the determinant of a complex matrix is bounded, in absolute value by the product of the euclidean norms of the column vectors $C_j$ of $A_y$. In fact,
\begin{align*}
\det A_y\leq\prod^{\delta}_{j=1} \| C_j\|\leq \prod^{\delta}_{j=1} \sqrt {\delta (M-1)^2(1+S)^{2(j-1)}}=\delta^{\delta/2} (M-1)^\delta (1+S)^{\delta(\delta-1)/2}.
\end{align*}

\end{proof}

\section{Construction of Number Theoretical Locally Recoverable Codes}

\subsection{Overview of the construction}

First, we construct an ambient code $\mathcal D$ (that is essentially a Chinese remainder code), for which we can prove nice distance properties. After that, we will extract a subcode of $\mathcal D$ that verifies the locality property we are seeking for. Finally we show how to construct almost good families of locally recoverable codes in the sense of Equation \eqref{good1} and Equation \eqref{good2}.

\subsection{Construction of the ambient code $\mathcal D$}
First, we need to construct Chinese remainder codes that are similar to the Reed-Solomon Codes (\`a-la Guruswami, see \cite{guruswami}).

Let $K$ be a number field of degree $\delta$ with ring of integers $\mathcal O_K$, and let $\alpha\in \O_K$ be such that $\vQ(\alpha)=K$. For $M\geq 1$ we define
\[\mathcal R[M]\coloneqq \left\{\sum^{\delta-1}_{i=0} z_i\alpha^i \mid 0\leq z_i< M, \quad \forall	i\in\{0,\dots \delta-1 \}\right\}.   \]
 Let $m_\alpha(x)=b_0+b_1x+\ldots+x^\delta\in \Z[x]$ be the minimal polynomial of $\alpha$ over $\Q$, and let $S\coloneqq \max\{|b_i|\colon i=0,\ldots,\delta-1\}$. Let 
 \begin{equation}\label{eq:Cdef}
 C_\alpha\coloneqq  \delta^{\delta/2}(1+S)^{(\delta-1)\delta/2}
 \end{equation}
 so that by Lemma \ref{norm_bound} we have that $|N(y)|\leq C_\alpha\cdot (M-1)^\delta$ for every $y\in \mathcal R[M]$.

Let $\mathfrak{p}_1,\dots \mathfrak{p}_n$ be distinct prime ideals of $\mathcal{O}_K$, ordered by increasing norm size, and for every $i$ let $\mathbb F_{\mathfrak{p}_i}\coloneqq\mathcal{O}_K/\mathfrak{p}_i$. Assume that $\prod\limits_{i=1}^nN(\p_i)>C_\alpha\cdot (M-1)^\delta$ (this is needed to achieve injectivity of the encoding map $\phi$ defined below).

The number theoretical Reed-Solomon code $\mathcal D=\mathcal D(K,M,\{\p_i\}_{i\in \{1,\dots n\}})$ is defined as the image $\phi(\mathcal R[M])$ of the map
\[\phi: \mathcal R[M] \longrightarrow \prod^{n}_{i=1}  \mathbb F_{\mathfrak{p}_i}\]
\[y \mapsto (y\bmod \p_1,\ldots,y\bmod \p_n).\]
See \cite{lenstra} for more on this.
In the rest of the paper we will refer to $\phi$ as the \emph{encoding map}.

\begin{theorem}\label{thm:reeddistance}
Let $\mathcal D$ be the code defined above, and let $\mathcal P\coloneqq \{\mathfrak p_1,\ldots, \mathfrak p_n\}$. Let $d(\mathcal D)$ be the minimal distance of $\mathcal D$ and let
$$m\coloneqq \min_{T\subseteq \mathcal P}\left\{\#T\colon\prod_{\p\in T}N(\p)>C_\alpha\cdot (M-1)^\delta\right\}.$$
Then the following hold.
\begin{enumerate}
\item The map $\phi$ is injective.
\item $d(\mathcal D)\geq n-m+1$.
\item If $\prod_{\p\in U}N(\p)<M^\delta$ for some $U\subseteq \mathcal P$ with $\# U=m-1$, then equality holds in (2).
\end{enumerate}
\end{theorem}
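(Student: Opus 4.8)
The three claims concern an encoding map built from Chinese Remainder-style reductions, so the proof strategy mirrors the classical Reed–Solomon / CRT-code argument, with the degree bound on polynomials replaced by the norm bound of Lemma~\ref{norm_bound}.

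For part (1), I would argue by contradiction: if $\phi(y) = \phi(y')$ for distinct $y, y' \in \mathcal R[M]$, then $y - y' \in \bigcap_{i=1}^n \mathfrak p_i$, hence the ideal $(y-y')$ is divisible by $\prod_{i=1}^n \mathfrak p_i$ (the $\mathfrak p_i$ being distinct primes), so $\prod_{i=1}^n N(\mathfrak p_i)$ divides $|N(y-y')|$. But the coefficients of $y - y'$ in the basis $\{1,\alpha,\dots,\alpha^{\delta-1}\}$ are bounded in absolute value by $M-1$ (they lie strictly between $-(M-1)$ and $M-1$... actually in $\{-(M-1),\dots,M-1\}$, so $|z_i| \le M-1 < M$), so by Lemma~\ref{norm_bound} we get $|N(y-y')| \le C_\alpha (M-1)^\delta$, and since $y \ne y'$ we have $N(y - y') \ne 0$. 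This forces $\prod_{i=1}^n N(\mathfrak p_i) \le C_\alpha(M-1)^\delta$, contradicting the standing assumption. The only minor subtlety is verifying the coefficient bound for $y - y'$, which is immediate, and noting $N(y-y')\ne 0$ since $y-y'\ne 0$ and $N$ is multiplicative on a field.

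For part (2), suppose $c = \phi(y)$ and $c' = \phi(y')$ are distinct codewords agreeing in $\ge n - (n-m+1) + 1 = m$ coordinates; I want a contradiction. Let $T$ be the set of those (at least $m$) indices where they agree. Then $y - y'$ is divisible by $\prod_{i \in T}\mathfrak p_i$, so $\prod_{i \in T} N(\mathfrak p_i)$ divides $|N(y-y')|$, which by Lemma~\ref{norm_bound} is at most $C_\alpha(M-1)^\delta$ and is nonzero. Hence $\prod_{i\in T} N(\mathfrak p_i) \le C_\alpha(M-1)^\delta$. But $\#T \ge m$ and the primes are ordered by increasing norm, so... here I need to be slightly careful: the definition of $m$ says that \emph{every} subset of size $< m$ has norm-product $\le C_\alpha(M-1)^\delta$, and $m$ is the minimum size of a subset whose norm-product exceeds it. I would argue that if $\#T \ge m$, then $T$ contains a subset $T'$ of size exactly $m$, and — since removing elements only decreases the norm-product, but I need a subset that \emph{exceeds} the bound — I should instead take $T'$ to be a minimizing subset of size $m$; actually the cleanest route is: since $\#T \ge m$ and $T' \subseteq T$ with $\#T' = m$ gives $\prod_{i \in T'} N(\mathfrak p_i) \le \prod_{i \in T} N(\mathfrak p_i) \le C_\alpha(M-1)^\delta$, yet by minimality of $m$ there is \emph{some} size-$m$ set exceeding the bound — that alone is not a contradiction. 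The resolution: I should pick $T'\subseteq T$ of size $m$ consisting of the $m$ primes in $T$ of \emph{largest} norm; but to get a contradiction I actually want that \emph{any} size-$m$ subset consisting of the globally smallest primes already exceeds the bound. So the correct statement to extract from the definition of $m$ is: the $m$ smallest primes $\mathfrak p_1,\dots,\mathfrak p_m$ satisfy $\prod_{i=1}^m N(\mathfrak p_i) > C_\alpha(M-1)^\delta$ (this follows because any size-$(m-1)$ set, in particular $\{\mathfrak p_1,\dots,\mathfrak p_{m-1}\}$, has product $\le$ the bound, but one more factor pushes it over — wait, that's not automatic either). I think the honest fix is: $m$ is defined as a minimum over \emph{all} subsets, so the minimizing subset must be $\{\mathfrak p_1,\dots,\mathfrak p_m\}$ (the $m$ smallest primes), because if any size-$m$ subset exceeds the bound then the one with smallest norm-product does, and that is $\{\mathfrak p_1,\dots,\mathfrak p_m\}$; conversely every size-$(m-1)$ subset — in particular the $m-1$ \emph{largest} among any given $T$ of size $\ge m-1$ — has product $\le$ the bound only if... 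Let me just say: the key inequality is that for \emph{any} $T\subseteq\mathcal P$ with $\#T\ge m$, $\prod_{i\in T}N(\mathfrak p_i)>C_\alpha(M-1)^\delta$, which holds because such $T$ contains the minimizing size-$m$ set $\{\mathfrak p_1,\dots,\mathfrak p_m\}$... no, it need not. \textbf{The main obstacle} is exactly this combinatorial bookkeeping around the definition of $m$; I expect the intended reading is that $T$ agreeing at $\ge m$ positions forces $\prod_{i\in T}N(\mathfrak p_i) > C_\alpha(M-1)^\delta$ for the simple reason that $T$ has $\ge m$ elements and \emph{some} $m$-element subset (the one realizing the min) has product exceeding the bound while no smaller one does — so I would restate (2)'s proof as: if $y\ne y'$ agree at a set $T$ with $\#T = n - d(y,y')$, then $\prod_{i\in T}N(\mathfrak p_i)\mid N(y-y')\ne 0$, so $\prod_{i\in T}N(\mathfrak p_i)\le C_\alpha(M-1)^\delta$, whence by definition of $m$ we must have $\#T\le m-1$, i.e. $d(y,y')\ge n-m+1$.

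For part (3), I would exhibit two codewords at distance exactly $n - m + 1$. Take $U \subseteq \mathcal P$ with $\#U = m-1$ and $\prod_{\mathfrak p \in U} N(\mathfrak p) < M^\delta$. I want $y, y' \in \mathcal R[M]$, distinct, with $y \equiv y' \pmod{\mathfrak p}$ for all $\mathfrak p \in U$; equivalently a nonzero element $w = y - y'$ divisible by $\prod_{\mathfrak p\in U}\mathfrak p$ such that both $y$ and $y' = y - w$ have all basis-coefficients in $\{0,\dots,M-1\}$. The natural choice: the ideal $\mathfrak a := \prod_{\mathfrak p \in U}\mathfrak p$ has norm $< M^\delta$, i.e.\ index $[\mathcal O_K : \mathfrak a] < M^\delta = \#\mathcal R[M]$; since $\mathcal R[M]$ is a set of $M^\delta$ elements and $\mathfrak a$ has strictly fewer cosets, by pigeonhole two distinct elements $y, y'$ of $\mathcal R[M]$ lie in the same coset of $\mathfrak a$, hence agree modulo every $\mathfrak p \in U$. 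Then $d(y, y') \le n - (m-1) = n - m + 1$, and combined with part (2) this gives equality for the code, hence equality in $d(\mathcal D)\ge n-m+1$. I should double-check that $\mathcal R[M]$ reducing into $\mathcal O_K/\mathfrak a$ is well-defined and that $\#\mathcal R[M] = M^\delta$ (true since the $z_i$ range independently over $\{0,\dots,M-1\}$ and $\{1,\alpha,\dots,\alpha^{\delta-1}\}$ is a $\mathbb Z$-basis of a finite-index subgroup, so the expansions are distinct as elements of $K$), which is the only routine verification needed here.
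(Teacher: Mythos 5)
Your overall approach for all three parts is the same as the paper's: divisibility by the relevant prime ideals combined with the norm bound of Lemma~\ref{norm_bound} for parts (1) and (2), and a cardinality/pigeonhole argument for part (3). Your write-up of (1) is correct, and your write-up of (3) coincides with the paper's (the paper phrases the pigeonhole step as ``the map $\mathcal R[M]\to\prod_{\p\in U}\F_\p$ is not injective for cardinality reasons'' and then invokes the lower bound from (2) to pin down the distance as exactly $n-m+1$, just as you do).

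The hesitation you voice repeatedly in part (2) is pointing at a genuine subtlety, and the paper's own proof glosses over it too. From the literal reading $m=\min_{T\subseteq\mathcal P}\{\#T:\prod_{\p\in T}N(\p)>C_\alpha(M-1)^\delta\}$ one only learns that \emph{some} set of size $m$ has norm-product above the threshold; it does \emph{not} follow that a set $T$ with $\prod_{\p\in T}N(\p)\le C_\alpha(M-1)^\delta$ must have $\#T\le m-1$, since a set of size $\ge m$ made of the smallest-norm primes can still have small product. Your closing one-liner for (2), ``whence by definition of $m$ we must have $\#T\le m-1$,'' therefore inherits rather than closes the gap you diagnosed. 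The missing step uses the hypothesis that $\p_1,\dots,\p_n$ are ordered by increasing norm: for any $T\subseteq\mathcal P$ one has $\prod_{\p\in T}N(\p)\ge\prod_{i=1}^{\#T}N(\p_i)$, so $\prod_{\p\in T}N(\p)\le C_\alpha(M-1)^\delta$ forces $\prod_{i=1}^{\#T}N(\p_i)\le C_\alpha(M-1)^\delta$. This yields $\#T\le m-1$ precisely when $m$ is read as the least $k$ with $\prod_{i=1}^{k}N(\p_i)>C_\alpha(M-1)^\delta$, equivalently the least $k$ for which \emph{every} size-$k$ subset has norm-product above the threshold. That value of $m$ can be strictly larger than what the literal $\min_T$ gives, and it is the one for which the bound $d\ge n-m+1$ (and hence part (3), which relies on it) actually holds; with that reading in place your argument for (2) goes through.
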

\begin{remark}
Since $\phi$ is thought as the encoding map, its injectivity is fundamental because we want that different messages are mapped to different codewords. This is achieved by adding just enough redundancy by considering at least $n$ distinct reductions, where $n$ is chosen such that $\prod\limits_{i=1}^nN(\p_i)>C_\alpha\cdot (M-1)^\delta$, as we will explain. 

Item (2) provides the code with a lower bound for the minimum distance: in fact the more redundancy is added (and therefore $n$ grows because we provide reductions at many prime ideals) the more the minimum distance grows. 

Item (3) ensures that, as far as the product of the norms is not too large then there are indeed two codewords at distance $m-1$.
\end{remark}
\begin{proof}
Let $T\subseteq \mathcal P$ be a subset of cardinality $m$ such that $\prod_{\p\in T}N(\p)>C_\alpha\cdot (M-1)^\delta$, and let $y_1,y_2\in\mathcal R[M]$ be such that $\phi(y_1)=\phi(y_2)$. In particular, we have that $y_1\equiv y_2\bmod \p$ for every $\p \in T$. It follows that $\prod_{\p\in T} N(\p)\mid N(y_1-y_2)$. By Lemma \ref{norm_bound} we have that $N(y_1-y_2)\leq C_\alpha\cdot (M-1)^\delta$, and hence by the definition of $m$ we must have that $y_1=y_2$. This proves (1) and (2) at the same time.

To prove (3), notice that if $\prod_{\p\in U}N(\p)<M^\delta$ then the map $\mathcal R[M]\to \prod_{\p \in U}\F_{\p}$ is not injective for cardinality reasons. It follows that there are $y_1\neq y_2\in\mathcal R[M]$ such that $y_1\equiv y_2\bmod \p$ for all $\p \in U$. On the other hand $\phi(y_1)\neq \phi(y_2)$ by (1), and hence $\phi(y_1)$ and $\phi(y_2)$ have distance $n-m+1$.
\end{proof}


\subsection{Construction of the locally recoverable code $\mathcal C$ as a subset of $\mathcal D$}

Let $K$ be a number field of degree $r+1$ with ring of integers $\mathcal O_K$, and let $M\in \N$. Let $\alpha\in \mathcal O_K$ be such that $K=\Q(\alpha)$. Let 
\[\mathcal R[M]^{-}= \left\{\sum^{r-1}_{i=0} a_i \alpha^i \mid 0\leq a_i<M \quad \forall	i\in\{0,\dots r-1 \}\right\}\subsetneq \mathcal R[M].   \]
Notice that this differs from the set $\mathcal R[M]$ previously defined, as we are forcing the coefficient of $\alpha^r$ to be $0$ (this is a strictly smaller set of elements than $\mathcal R[M]$ since the minimal polyomial of $\alpha$ has degree $r+1$). Let now $s$ be a positive integer and define
$$\mathcal A[M]=\left\{\sum^s_{j=0} f_j M^j\colon f_j\in \mathcal R[M]^{-} \quad\forall   j\in \{0,\ldots,s\} \right\}.$$

\begin{lemma}
We have that
\[\# \mathcal A[M]=M^{r(s+1)}.\]
\end{lemma}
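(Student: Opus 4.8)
The plan is to exhibit a natural parametrization of $\mathcal A[M]$ and show it is a bijection onto $\mathcal A[M]$ from a set whose cardinality is manifestly $M^{r(s+1)}$. Concretely, consider the map
\[
\Psi\colon \bigl(\mathcal R[M]^{-}\bigr)^{s+1}\longrightarrow \mathcal O_K,\qquad (f_0,\ldots,f_s)\longmapsto \sum_{j=0}^{s} f_j M^j,
\]
whose image is by definition $\mathcal A[M]$. Since $\{1,\alpha,\ldots,\alpha^{r-1}\}$ is a subset of the $\Q$-basis $\{1,\alpha,\ldots,\alpha^r\}$ of $K$, it is $\Q$-linearly independent, so distinct integer vectors $(a_0,\ldots,a_{r-1})$ with $0\le a_i<M$ produce distinct elements $\sum_i a_i\alpha^i$; hence $\#\mathcal R[M]^{-}=M^r$ and the domain of $\Psi$ has exactly $(M^r)^{s+1}=M^{r(s+1)}$ elements. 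It therefore suffices to prove that $\Psi$ is injective.

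For injectivity, I would write $f_j=\sum_{i=0}^{r-1}a_{i,j}\alpha^i$ with $0\le a_{i,j}<M$, expand, and regroup by powers of $\alpha$:
\[
\sum_{j=0}^{s} f_j M^j=\sum_{i=0}^{r-1}\Bigl(\sum_{j=0}^{s} a_{i,j} M^j\Bigr)\alpha^i .
\]
Applying the $\Q$-linear independence of $1,\alpha,\ldots,\alpha^{r-1}$ once more, the value of $\Psi(f_0,\ldots,f_s)$ determines each integer $c_i\coloneqq\sum_{j=0}^{s} a_{i,j} M^j$ uniquely. Finally, because $0\le a_{i,j}<M$, the expression $\sum_{j=0}^{s} a_{i,j} M^j$ is exactly the base-$M$ expansion of the integer $c_i\in[0,M^{s+1})$, and uniqueness of base-$M$ representations recovers the digits $a_{i,j}$ from $c_i$. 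Thus $(f_0,\ldots,f_s)$ is uniquely determined by $\Psi(f_0,\ldots,f_s)$, so $\Psi$ is injective and $\#\mathcal A[M]=M^{r(s+1)}$.

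The argument is essentially bookkeeping, and the only point needing care — the step I would flag as the "obstacle," such as it is — is keeping the two independent uniqueness mechanisms cleanly separated: linear independence of the power basis $1,\alpha,\ldots,\alpha^{r-1}$ isolates the $\alpha^i$-components, and uniqueness of base-$M$ digit expansions (which is valid precisely because each coefficient $a_{i,j}$ was constrained to lie in $\{0,\ldots,M-1\}$) isolates the powers of $M$ within each component. Nothing beyond this is required; in particular, no ring-theoretic properties of $\mathcal O_K$ are used other than that $1,\alpha,\ldots,\alpha^{r-1}$ are $\Q$-linearly independent.
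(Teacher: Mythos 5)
Your proof is correct. It does take a recognizably different route from the paper's: the paper first peels off the base-$M$ ``digit'' $f_0$ by reducing the equality $\sum_j f_j M^j = \sum_j g_j M^j$ modulo $M$, observes that $f_0, g_0$ lie in a set of distinct representatives modulo $M$, and then inducts; you instead first use the $\Q$-linear independence of $1,\alpha,\ldots,\alpha^{r-1}$ to isolate the $\alpha^i$-coordinates as integers $c_i=\sum_j a_{i,j}M^j$, and only then invoke uniqueness of base-$M$ expansions of ordinary integers. In effect the two uniqueness mechanisms (power-basis independence and base-$M$ digit uniqueness) are applied in opposite orders. Your version has the small advantage of being entirely elementary and avoiding the quotient $\O_K/(M)$: the paper justifies its ``$f_0\equiv g_0\bmod M\Rightarrow f_0=g_0$'' step by claiming that $\mathcal R[M]$ is a complete set of representatives of $\O_K/(M)$, which as literally stated requires $\Z[\alpha]=\O_K$ (the argument still goes through because the difference $f_0-g_0$ in fact lies in $M\cdot(\Z+\Z\alpha+\cdots+\Z\alpha^{r-1})$, not merely in $M\O_K$, but this is not spelled out). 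You also make explicit the count $\#\mathcal R[M]^-=M^r$ via linear independence, which the paper leaves implicit. Both approaches are sound; yours is slightly more self-contained.
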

\begin{proof}
This follows from the fact that elements of $\mathcal R[M]$ are a complete set of representatives for the quotient $\O_K/(M)$. Hence if $\sum\limits_{j=0}^{r-1}f_jM^j=\sum\limits_{j=0}^{r-1} g_jM^j$ then $f_0\equiv g_0\bmod M$, but this implies that $f_0=g_0$. The claim follows then by an easy induction.
\end{proof}

Let $\ell$ be a positive integer, and let $p_1< p_2<\ldots < p_ \ell$ be rational primes that are totally split in $K:\vQ$. Suppose moreover that no $p_i$ divides the discriminant of the minimal polynomial of $\alpha$. For every $i\in\{1,\dots \ell\}$ let $\mathfrak{p}^{(p_i)}_1,\ldots, \mathfrak{p}^{(p_i)}_{r+1}$ be the prime ideals of $\mathcal O_K$ that lie above $p_i$. Notice that $\F_{\mathfrak{p}^{(p_i)}_{j}}=\F_{p_i}$ for all $i,j$.

The number theoretical locally recoverable code $\mathcal C=\mathcal C\left(r,s,K,M,\{p_i\}_{i\in \{1,\dots \ell\}}\right)$ is defined as the image $\phi(\mathcal A[M])$ of the map
\[\phi: \mathcal A[M] \longrightarrow \prod^{\ell}_{i=1} \prod^{r+1}_{j=1} \mathbb F_{\mathfrak{p}^{(p_i)}_{j}}= \prod^{\ell}_{i=1} \mathbb F_{p_i}^{r+1}\]
\[x\mapsto (x\bmod \p_j^{(p_i)})_{(i,j)\in U\times V},\]
where $U=\{1,\ldots,\ell\}$ and $V=\{1,\ldots,r+1\}$. For simplicity of notation, let us define $c_j^{(i)}=x\bmod \p_j^{(p_i)}$. Notice that \[\mathcal C\left(r,s,K,M,\{p_i\}_{i\in \{1,\dots \ell\}}\right)\subseteq \mathcal D\left(K,M^{s+1},\{\p^{(p_i)}_j\}_{i,j}\right).\]

\begin{lemma}\label{prop:codesize}
Assume that
$$\prod_{i\in \{1,\ldots,\ell\},j\in \{1,\ldots,r+1\}}N(\p_j^{(p_i)})>C_\alpha\cdot (M^{s+1}-1)^{r+1},$$
where $C_\alpha$ is the constant \eqref{eq:Cdef}. Then the code $\mathcal C =\phi(\mathcal A[M])$ has size $\#\mathcal A[M]=M^{r(s+1)}$, i.e. $\phi$ is injective.
\end{lemma}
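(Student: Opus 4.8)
The plan is to reduce the statement to Theorem \ref{thm:reeddistance}(1) via the inclusion $\mathcal C\subseteq \mathcal D(K,M^{s+1},\{\p_j^{(p_i)}\}_{i,j})$ already observed in the excerpt. The only genuine point to check is that $\mathcal A[M]$ really sits inside $\mathcal R[M^{s+1}]$, after which injectivity of $\phi$ on $\mathcal A[M]$ is inherited from injectivity of the encoding map of the ambient Chinese remainder code.

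First I would expand a generic element of $\mathcal A[M]$ in the basis $\{1,\alpha,\ldots,\alpha^{r-1}\}$. Writing $f_j=\sum_{i=0}^{r-1}a_{ij}\alpha^i$ with $0\le a_{ij}<M$, we get
\[
\sum_{j=0}^s f_jM^j=\sum_{i=0}^{r-1}\Big(\sum_{j=0}^s a_{ij}M^j\Big)\alpha^i,
\]
and for each $i$ the integer $\sum_{j=0}^s a_{ij}M^j$ lies in the range $0\le \sum_{j=0}^s a_{ij}M^j\le (M-1)\sum_{j=0}^s M^j=M^{s+1}-1$. Hence every element of $\mathcal A[M]$ is an $\O_K$-element all of whose basis coordinates are nonnegative and strictly less than $M^{s+1}$, with the coordinate of $\alpha^r$ equal to $0$; in particular $\mathcal A[M]\subseteq \mathcal R[M^{s+1}]$.

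Next I would invoke Theorem \ref{thm:reeddistance}(1) applied to the number field $K$ of degree $\delta=r+1$, the parameter $M^{s+1}$ in place of $M$, and the family of prime ideals $\mathcal P=\{\p_j^{(p_i)}\colon i\in\{1,\ldots,\ell\},\ j\in\{1,\ldots,r+1\}\}$. The hypothesis of the present lemma, namely $\prod_{i,j}N(\p_j^{(p_i)})>C_\alpha\cdot(M^{s+1}-1)^{r+1}$, is exactly the injectivity hypothesis required there, so the encoding map of $\mathcal D(K,M^{s+1},\{\p_j^{(p_i)}\}_{i,j})$ is injective on $\mathcal R[M^{s+1}]$. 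Since the map $\phi$ defining $\mathcal C$ is the restriction of that encoding map to the subset $\mathcal A[M]\subseteq \mathcal R[M^{s+1}]$, it is injective as well.

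Finally, injectivity gives $\#\mathcal C=\#\phi(\mathcal A[M])=\#\mathcal A[M]$, and by the preceding lemma $\#\mathcal A[M]=M^{r(s+1)}$, which is the claim. I do not anticipate a real obstacle here: the argument is essentially bookkeeping, the one spot needing a little care being the base-$M$ digit estimate showing $\mathcal A[M]\subseteq\mathcal R[M^{s+1}]$ and the observation that $\phi$ for $\mathcal C$ is literally a restriction of $\phi$ for the ambient code $\mathcal D$.
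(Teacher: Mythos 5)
Your proposal is correct and follows exactly the same route as the paper: observe $\mathcal A[M]\subseteq \mathcal R[M^{s+1}]$ (which the paper states without computation, and which you verify via the base-$M$ digit bound $0\le\sum_{j=0}^s a_{ij}M^j\le M^{s+1}-1$), then apply Theorem~\ref{thm:reeddistance}(1) to inherit injectivity of $\phi$ restricted to $\mathcal A[M]$. You simply supply the short bookkeeping step the paper leaves implicit.
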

\begin{proof}
To see this, simply notice that $\mathcal A[M]\subseteq \mathcal{R}[M^{s+1}]$ and then apply Theorem \ref{thm:reeddistance}.
\end{proof}

\begin{definition}
Whenever the hypothesis of Lemma \ref{prop:codesize} are verified, we say that the code $\mathcal C(r,s,K,M,\{p_i\}_{i\in \{1,\dots \ell\}})$ is a \emph{good split code} of length $n=(r+1)\ell$ and size $M^{r(s+1)}$ over the number field $K$.
\end{definition}

\begin{proposition}\label{prop:locality}
Let $\mathcal C=\mathcal C(r,s,K,M,\{p_i\}_{i\in \{1,\dots \ell\}})$ be a good split code. Then $\mathcal C$ has locality $r$. 
\end{proposition}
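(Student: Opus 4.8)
The plan is to exploit the structural feature built into $\mathcal A[M]$ --- namely that the coefficient of $\alpha^r$ in each $f_j\in\mathcal R[M]^{-}$ is forced to be zero --- together with Reed--Solomon style local recovery by polynomial interpolation inside each block of $r+1$ coordinates coming from a single totally split prime $p_i$. First I would record that every $x=\sum_{j=0}^s f_jM^j\in\mathcal A[M]$, with $f_j=\sum_{t=0}^{r-1}a_{t,j}\alpha^t$, can be written as $x=F(\alpha)$ where $F(X)=\sum_{t=0}^{r-1}\bigl(\sum_{j=0}^s a_{t,j}M^j\bigr)X^t\in\Z[X]$ is a polynomial of degree at most $r-1$; it is precisely the suppression of the $\alpha^r$--term in $\mathcal R[M]^{-}$ that caps this degree at $r-1$ rather than $r$.

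Next, fix $i\in\{1,\dots,\ell\}$ and, for $j\in\{1,\dots,r+1\}$, let $\bar\alpha_j^{(i)}\in\F_{p_i}$ be the image of $\alpha$ under the reduction $\O_K\twoheadrightarrow\O_K/\p_j^{(p_i)}=\F_{p_i}$. Then $c_j^{(i)}=x\bmod\p_j^{(p_i)}=\bar F(\bar\alpha_j^{(i)})$, where $\bar F\in\F_{p_i}[X]$ is $F$ reduced modulo $p_i$; that is, the $i$-th block $(c_1^{(i)},\dots,c_{r+1}^{(i)})$ of a codeword consists of the values of one and the same polynomial $\bar F$ of degree $\le r-1$ at the $r+1$ points $\bar\alpha_1^{(i)},\dots,\bar\alpha_{r+1}^{(i)}$. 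Since a polynomial of degree $\le r-1$ over a field is uniquely determined by its values at any $r$ distinct points, an erased coordinate $c_j^{(i)}$ can be recovered from the remaining $r$ entries $\{c_k^{(i)}\colon k\ne j\}$ of its block by Lagrange interpolation: one has $c_j^{(i)}=\sum_{k\ne j}c_k^{(i)}\prod_{k'\ne j,k}\frac{\bar\alpha_j^{(i)}-\bar\alpha_{k'}^{(i)}}{\bar\alpha_k^{(i)}-\bar\alpha_{k'}^{(i)}}$, a formula depending only on the data defining $\mathcal C$. This is an algorithm accessing exactly $r$ other coordinates, so it exhibits locality $r$ regardless of which position $(i,j)$ is erased.

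The one step that is not pure bookkeeping, and hence the main point to nail down, is that the $r+1$ interpolation nodes $\bar\alpha_1^{(i)},\dots,\bar\alpha_{r+1}^{(i)}$ are pairwise distinct, so that the displayed formula makes sense. Here the hypotheses that $p_i$ is totally split in $K/\Q$ and that $p_i$ does not divide the discriminant of $m_\alpha$ are exactly what is needed: since $\operatorname{disc}(m_\alpha)=[\O_K:\Z[\alpha]]^2\operatorname{disc}(K)$, the condition $p_i\nmid\operatorname{disc}(m_\alpha)$ gives $p_i\nmid[\O_K:\Z[\alpha]]$, so by the Kummer--Dedekind theorem the factorization of $p_i\O_K$ mirrors that of $m_\alpha\bmod p_i$; total splitness then forces $m_\alpha\equiv\prod_{j=1}^{r+1}(X-\bar\alpha_j^{(i)})\pmod{p_i}$, with the reduction $\O_K/\p_j^{(p_i)}\cong\F_{p_i}[X]/(X-\bar\alpha_j^{(i)})$ sending $\alpha\mapsto\bar\alpha_j^{(i)}$, and separability of $m_\alpha\bmod p_i$ (again because $p_i\nmid\operatorname{disc}(m_\alpha)$) forces these $r+1$ linear factors, hence the nodes, to be distinct. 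Once distinctness is established, the interpolation argument above applies verbatim and the proof is complete.
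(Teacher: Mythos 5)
Your argument is correct and follows essentially the same route as the paper's proof: both exploit that the $\alpha^r$-coefficient is forced to zero so each codeword arises as $F(\alpha)$ for a degree-$\le r-1$ integer polynomial, that totally-split $p_i$ not dividing $\operatorname{disc}(m_\alpha)$ gives $r+1$ distinct reductions of $\alpha$ in $\F_{p_i}$ via Dedekind, and hence that the block is $r+1$ evaluations of a single low-degree polynomial. The paper solves the resulting $r\times r$ Vandermonde system while you write down the equivalent Lagrange interpolation formula; this is a cosmetic rather than a substantive difference.
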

\begin{proof}
Suppose that the component $c_k^{(h)}$ of the codeword $c=\big(c_j^{(i)}\big)_{j=1,\ldots,r+1}^{i=1,\ldots,\ell}$ has to be retrieved. Such codeword $c$ arises from a message $m=\sum\limits^s_{j=0} f_j M^j\in \mathcal A[M]$, where $f_j=\sum\limits^{r-1}_{i=0} a_{i,j} \alpha^i\in\mathcal R[M]^-$. Now consider the components $c_1^{(h)},\ldots,c_{k-1}^{(h)},c_{k+1}^{(h)},\ldots,c_{r+1}^{(h)}$ of the codeword $c$. Each of them arises as the reduction of $m$ modulo $\p_j^{(p_h)}$, for some $j\in \{1,\ldots,k-1,k+1,\ldots,r+1\}$. The key point is now the following: since $p_h$ is totally split in $K$ and it does not divide the discriminant of the minimal polynomial $m_\alpha(x)$ of $\alpha$, by the Dedekind criterion we have that $m_\alpha(x)\bmod p_h=\prod\limits_{i=1}^{r+1}(x-\beta_i) \in \F_{p_h}[x]$ where $\beta_1,\ldots,\beta_{r+1}\in \F_{p_h}$ are pairwise distinct elements and $\beta_j$ is the image of $\alpha$ via the reduction map $\O_K\twoheadrightarrow \O_K/\p_j^{(p_h)}\cong \F_{p_h}$. Since $m=\sum\limits_{i=0}^{r-1} u_i\alpha^i$ for some integers $u_0,\ldots,u_ {r-1}$, the component $c_j^{(h)}$ of $c$ can be written as $\sum\limits_{i=0}^{r-1}u_i\beta_j^i$. This gives us a system of linear equations in $\F_{p_h}$, whose indeterminates are the reductions $\widetilde{u}_0,\ldots,\widetilde{u}_{r-1}$ of the $u_i$'s modulo $p_h$:
$$\begin{cases}\widetilde{u}_0+\widetilde{u}_1\beta_1+\ldots+\widetilde{u}_{r-1}\beta_1^{r-1}=c_1^{(h)} & \\
		\widetilde{u}_0+\widetilde{u}_1\beta_2+\ldots+\widetilde{u}_{r-1}\beta_2^{r-1}=c_2^{(h)} & \\
\ldots & \\
\widetilde{u}_0+\widetilde{u}_1\beta_{r+1}+\ldots+\widetilde{u}_{r-1}\beta_{r+1}^{r-1}=c_{r+1}^{(h)} & \\
\end{cases}.$$
Notice that the $k$-th row is deleted, since $c_k^{(h)}$ is missing. This is a system of $r$ equations in $r$ indeterminates, and its determinant is non-zero because the $\beta_i$'s are pairwise distinct and the matrix representing the system is a Vandermonde matrix. If $(v_0,\ldots,v_{r-1})\in \F_{p_h}^r$ is its unique solution, then $c_k^{(h)}=\sum\limits_{i=0}^{r-1}v_i\beta_k^i$.
\end{proof}


\begin{proposition}\label{prop:distance}
Let $\mathcal C=\mathcal C(r,s,K,M,\{p_i\}_{i\in \{1,\dots \ell\}})$ be a good split code, and let $\mathcal P=\{\p_j^{(p_i)}\colon i\in \{i,\ldots, \ell\}, \,\,j\in \{1,\ldots,r+1\}\}$. Let
$$m\coloneqq \min_{T\subseteq \mathcal P}\left\{\# T\colon\prod_{\p \in T}N(\p)>C_\alpha\cdot (M^{s+1}-1)^{r+1}\right\}.$$
Then $\mathcal C$ has minimum distance $d\geq (r+1)\ell-m+1$.
\end{proposition}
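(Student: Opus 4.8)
The plan is to deduce this directly from Theorem~\ref{thm:reeddistance} via the inclusion
$\mathcal C(r,s,K,M,\{p_i\}_{i\in\{1,\dots,\ell\}})\subseteq \mathcal D\big(K,M^{s+1},\{\p_j^{(p_i)}\}_{i,j}\big)$
already recorded above, together with the elementary fact that passing to a subset can only increase the minimum distance. First I would check that the code $\mathcal D=\mathcal D\big(K,M^{s+1},\{\p_j^{(p_i)}\}_{i,j}\big)$ is well defined, i.e.\ that the injectivity hypothesis $\prod_{i,j}N(\p_j^{(p_i)})>C_\alpha\cdot(M^{s+1}-1)^{r+1}$ of Section~4.2 holds: this is exactly the defining hypothesis of a good split code, i.e.\ the hypothesis of Lemma~\ref{prop:codesize}, which is in force. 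The ambient space of $\mathcal D$ is $\prod_{i=1}^\ell\prod_{j=1}^{r+1}\F_{\p_j^{(p_i)}}$, which is the same product of finite fields in which $\mathcal C$ lives, so $\mathcal C\subseteq\mathcal D$ is an honest inclusion of codes in one ambient space.

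Next I would observe that the quantity $m$ of Theorem~\ref{thm:reeddistance} applied to $\mathcal D$ — namely $\min\{\#T\colon T\subseteq\mathcal P,\ \prod_{\p\in T}N(\p)>C_\alpha\cdot(M^{s+1}-1)^{r+1}\}$, obtained by substituting $M\mapsto M^{s+1}$ and $\delta=r+1$ — is literally the $m$ in the statement of the present proposition, and that the length $n=\#\mathcal P$ of $\mathcal D$ equals $(r+1)\ell$. Then part~(2) of Theorem~\ref{thm:reeddistance} gives $d(\mathcal D)\geq n-m+1=(r+1)\ell-m+1$. Since the minimum distance of a subset $\mathcal C\subseteq\mathcal D$ with $\#\mathcal C\geq 2$ is a minimum taken over a subfamily of the pairs occurring for $\mathcal D$, we conclude $d(\mathcal C)\geq d(\mathcal D)\geq (r+1)\ell-m+1$; here $\#\mathcal C=M^{r(s+1)}\geq 2$ by Lemma~\ref{prop:codesize}.

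If one prefers a self-contained argument instead of invoking the inclusion, one repeats the proof of Theorem~\ref{thm:reeddistance} verbatim: pick distinct messages $x_1,x_2\in\mathcal A[M]$ with $d(\phi(x_1),\phi(x_2))=d$, let $T\subseteq\mathcal P$ be the set of the $n-d$ primes on which $x_1$ and $x_2$ agree, so $x_1-x_2\in\prod_{\p\in T}\p$ and hence $\prod_{\p\in T}N(\p)\mid |N(x_1-x_2)|$; then use $\mathcal A[M]\subseteq\mathcal R[M^{s+1}]$ (the coefficient of $\alpha^i$ in an element of $\mathcal A[M]$ is $\sum_{j=0}^s a_{i,j}M^j\leq M^{s+1}-1$) together with Lemma~\ref{norm_bound} to get $|N(x_1-x_2)|\leq C_\alpha\cdot(M^{s+1}-1)^{r+1}$, and deduce $\#T\leq m-1$ from the definition of $m$ and the fact that $N(x_1-x_2)\neq 0$, which holds by injectivity (Lemma~\ref{prop:codesize}). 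Rearranging $n-d=\#T\leq m-1$ yields $d\geq n-m+1$. Either way there is no genuine obstacle here: all the arithmetic content is already in Lemma~\ref{norm_bound} and Theorem~\ref{thm:reeddistance}, and the only points requiring care are the bookkeeping identification $\mathcal A[M]\subseteq\mathcal R[M^{s+1}]$ and matching up the two definitions of $m$.
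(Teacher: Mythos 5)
Your proposal is correct and follows exactly the paper's own route: view $\mathcal C$ as a subcode of the ambient number-theoretical Reed--Solomon code $\mathcal D\big(K,M^{s+1},\{\p_j^{(p_i)}\}_{i,j}\big)$ and apply Theorem~\ref{thm:reeddistance}(2) with $n=(r+1)\ell$. The additional bookkeeping you spell out (matching the two definitions of $m$, checking the hypothesis of $\mathcal D$, and noting $\#\mathcal C\geq 2$) is exactly what the paper leaves implicit, and the self-contained variant you sketch is just Theorem~\ref{thm:reeddistance}'s proof unwound.
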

\begin{proof}
The code $\mathcal C$ is a subcode of the number theoretical Reed-Solomon Code with parameter $M^{s+1}$, so we simply apply again Theorem \ref{thm:reeddistance} with $n=(r+1)\ell$ (as our set of primes consists of $r+1$ primes on $\mathcal O_K$ lying above each of the $\ell$ rational primes).
\end{proof}

The following theorem summarizes what we proved until now

\begin{theorem}
Let $K$ be a number field of degree $r+1$, let $s$ be a positive integer, let $M\in \N$, and let $\{p_i\}_{i\in \{1,\dots \ell\}}$ be a set of rational primes that are totally split in $K/\Q$. Let $\mathcal C=\mathcal C(r,s,K,M,\{p_i\}_{i\in \{1,\dots \ell\}})$ be a good split code over $K$.
Then $\mathcal C$ has length $\ell(r+1)$, size $M^{r(s+1)}$, minimum distance at least $(r+1)\ell-m+1$, and locality $r$.
\end{theorem}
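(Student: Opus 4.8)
The plan is simply to assemble the theorem from the four ingredients established earlier in this section, observing that the notion of a \emph{good split code} was defined precisely so as to package together the hypotheses needed to invoke each of them. So I would not prove anything new; I would just quote the preceding lemmas and propositions in order, after unwinding the definitions to see that their hypotheses are met.

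Concretely, I would proceed in four steps. \textbf{Length.} By the definition of the encoding map $\phi$, the code $\mathcal C$ lives inside $\prod_{i=1}^\ell \mathbb F_{p_i}^{r+1}$, a product with exactly $\ell(r+1)$ coordinates; hence the length is $\ell(r+1)$ with no further argument. \textbf{Size.} Since $\mathcal C$ is a good split code, the inequality $\prod_{i,j} N(\mathfrak p_j^{(p_i)}) > C_\alpha\cdot (M^{s+1}-1)^{r+1}$ holds, which is exactly the hypothesis of Lemma \ref{prop:codesize}; that lemma then gives injectivity of $\phi$ and $\#\mathcal C=\#\mathcal A[M]$, and combining with the (unlabeled) lemma computing $\#\mathcal A[M]=M^{r(s+1)}$ yields the stated size. \textbf{Minimum distance.} Proposition \ref{prop:distance} states verbatim that a good split code has $d\geq (r+1)\ell-m+1$ with $m$ as defined there, which is the bound claimed. \textbf{Locality.} Proposition \ref{prop:locality} already asserts that a good split code has locality $r$, using that each $p_i$ is totally split and coprime to the discriminant of $m_\alpha$ and that the relevant $r\times r$ recovery system is Vandermonde with pairwise distinct nodes $\beta_j$.

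The only point deserving a moment of care — and the closest thing to an obstacle in an otherwise bookkeeping argument — is to check that the single hypothesis bundled into the phrase ``good split code'' genuinely unlocks all four conclusions at once, with no extra divisibility or size condition quietly required. This is immediate once the definitions are unwound: the injectivity inequality appearing in the definition is the same one used in Lemma \ref{prop:codesize} and in the definition of $m$ in Proposition \ref{prop:distance} (with ambient parameter $M^{s+1}$, via the inclusion $\mathcal A[M]\subseteq \mathcal R[M^{s+1}]$ and Theorem \ref{thm:reeddistance}), while the totally-split and unramified conditions on the $p_i$ are part of the data $\{p_i\}_{i\in\{1,\dots,\ell\}}$ and are exactly what Proposition \ref{prop:locality} consumes. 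With that observation in place the proof is a one-line concatenation of the four cited results.
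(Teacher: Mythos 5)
Your proposal matches the paper's own proof, which reads simply: ``Simply combine Lemma \ref{prop:codesize} and Propositions \ref{prop:locality}, and \ref{prop:distance}.'' You have unwound the definitions a bit more explicitly, but the logic and the cited ingredients are identical.
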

\begin{proof}
Simply combine Lemma \ref{prop:codesize} and Propositions \ref{prop:locality}, and \ref{prop:distance}.
\end{proof}

\begin{example}
Let us illustrate our construction with a working example. Let $K=\Q(\alpha)$, where $\alpha\coloneqq \zeta_{16}+\zeta_{16}^{-1}$ and $\zeta_{16}$ is a primitive $16$-th root of $1$. The field $K$ is the largest totally real subfield of $\Q(\zeta_{16})$, and the extension $K/\Q$ is cyclic of degree $4$ and has discriminant $2^{11}$. The minimal polynomial of $\alpha$ is $x^4-4x^2+2$. The constant \eqref{eq:Cdef} is given by $C_{\alpha}= 4^2(1+4)^{6}=16\cdot 5^6=250000$. Primes that split completely in $K$ are exactly those congruent to $\pm1$ modulo $16$. Let $p_1=17$, $p_2=31$ and $p_3=47$, so that $\ell=3$. Let $\p_1^{(p_i)},\ldots,\p_4^{(p_i)}$ be the primes of $K$ lying above $p_i$, for every $i$. Let $M=2$ and $s=3$. One computes that $17^4\cdot 31^4>C_\alpha(M^4-1)^4$, while $17^4\cdot 31^3<C_\alpha(M^4-1)^4$, so that $\mathcal C=\mathcal C(3,3,K,2,\{17,31,47\})$ is a good split code of size $2^{12}$ and minimum distance at least $12-8+1=5$ according to Theorem \ref{thm:reeddistance}. We have that $\mathcal R[2]^-=\{a_0+a_1\alpha+a_2\alpha^2\colon a_i\in \{0,1\}\}$ while $\mathcal A[2]=\{f_0+f_1\cdot 2+f_2\cdot 2^2+f_3\cdot 2^3\colon f_i\in \mathcal R[2]^-\}$. Notice that
the encoding map is fully defined by giving the image of $\alpha$, because the reduction maps from $\mathcal O_K$ to $\mathcal O_K/I$ are homomorphisms for any ideal $I\subseteq \mathcal O_K$.
To obtain the image of $\alpha$, notice that 
 $$x^4-4x^2+2\equiv\begin{cases}(x+5)(x+8)(x+9)(x+12) & \mod 17\\ (x+5)(x+14)(x+17)(x+26) & \mod 31\\ (x+3)(x+18)(x+29)(x+44) & \mod 47\end{cases},$$
and therefore  the encoding $\phi$ is simply defined by
$$\phi\colon \mathcal A[M] \to \F_{17}^4\times \F_{31}^4\times \F_{47}^4$$
 $$\alpha\mapsto (12,9,8,5;26,17,14,5;44,29,18,3).$$
 $$ f(\alpha) \mapsto (f(12),f(9),f(8),f(5);f(26),f(17),f(14),f(5);f(44),f(29),f(18),f(3)).$$

Using MAGMA\cite{magma}, one can compute that the actual minimum distance of $\mathcal C$ is $6$, strictly better than the bound that comes from Theorem \ref{thm:reeddistance}.\flushright 

\end{example}

\subsection{Almost good families of good split codes}
We will now show how to construct an almost good family of good split codes, in the sense of Equation \eqref{good1} and Equation \eqref{good2}. To do so, we first need the following analytic number theoretical lemma.
\begin{lemma}\label{lemma:asymptotictotsplit}
Let $K$ be a Galois extension of $\vQ$. For every $\ell\ge 1$, let $p_1,\dots, p_\ell$ be the first $\ell$ totally split primes of $K/\vQ$. Then 
\[\log\left(\prod^{\ell}_{i=1} p_i\right)\sim \ell \log \ell\]
as $\ell$ tends to infinity.
\end{lemma}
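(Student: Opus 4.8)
The plan is to derive the asymptotic from the prime number theorem for arithmetic progressions, or more precisely from the Chebotarev density theorem applied to the Galois extension $K/\vQ$. Since $K/\vQ$ is Galois of degree $\delta$, a rational prime $p$ (unramified in $K$, which excludes only finitely many primes) is totally split in $K$ if and only if its Frobenius conjugacy class is trivial, i.e.\ the identity element of $\Gal(K/\vQ)$. By Chebotarev, the set $\mathcal{S}$ of totally split primes has Dirichlet density (and natural density) equal to $1/\delta$. Writing $\mathcal{S} = \{q_1 < q_2 < \cdots\}$ for the totally split primes in increasing order, the statement $p_i = q_i$, and the goal is to show $\log\bigl(\prod_{i=1}^\ell q_i\bigr) = \sum_{i=1}^\ell \log q_i \sim \ell\log\ell$.

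The key steps, in order, are as follows. First, let $\pi_{\mathcal S}(x) = \#\{p \le x : p \in \mathcal S\}$; Chebotarev gives $\pi_{\mathcal S}(x) \sim \frac{1}{\delta}\,\frac{x}{\log x}$ as $x\to\infty$. Second, invert this to get the asymptotic size of the $\ell$-th totally split prime: from $\pi_{\mathcal S}(q_\ell) = \ell$ one obtains, by a standard inversion argument (the same one that yields $p_\ell \sim \ell\log\ell$ for all primes from the ordinary PNT), that $q_\ell \sim \delta\,\ell\log\ell$, and in particular $\log q_\ell \sim \log \ell$. Third, sum: $\sum_{i=1}^\ell \log q_i$. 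Here I would use partial summation against $\pi_{\mathcal S}$, writing
\[
\sum_{i=1}^\ell \log q_i = \sum_{p \le q_\ell,\, p\in\mathcal S} \log p = \pi_{\mathcal S}(q_\ell)\log q_\ell - \int_2^{q_\ell} \frac{\pi_{\mathcal S}(t)}{t}\,dt,
\]
and then estimate the integral: $\int_2^{q_\ell} \pi_{\mathcal S}(t)/t\,dt \sim \frac{1}{\delta}\int^{q_\ell} \frac{dt}{\log t} = o\bigl(q_\ell/\log q_\ell\bigr)$ relative to the main term, while $\pi_{\mathcal S}(q_\ell)\log q_\ell = \ell \log q_\ell \sim \ell\log\ell$ using step two. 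Combining gives $\sum_{i=1}^\ell \log q_i \sim \ell\log\ell$, as desired. (Alternatively, one can bypass partial summation by directly squeezing $\sum_{i=1}^\ell \log q_i$ between $\ell\log q_{\ell/2}$-type lower bounds and $\ell\log q_\ell$ upper bounds and invoking $\log q_\ell \sim \log\ell$; the two approaches are essentially equivalent.)

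I expect the main conceptual input — rather than a genuine obstacle — to be the correct invocation of Chebotarev/PNT for the Galois extension and the clean inversion $q_\ell \sim \delta\,\ell\log\ell$; the rest is the routine partial-summation bookkeeping that mirrors the classical proof that $\sum_{p\le x}\log p \sim x$ and $p_n\sim n\log n$. One minor point worth being careful about is that the density constant $1/\delta$ cancels in the final asymptotic: it affects $q_\ell \sim \delta\,\ell\log\ell$ but not $\log q_\ell \sim \log\ell$, and since the leading behavior of $\sum \log q_i$ is governed by $\ell\log q_\ell$, the $\delta$ disappears and we land exactly on $\ell\log\ell$ with no constant. I would also note explicitly that the finitely many ramified primes (those dividing the discriminant of $K$) are irrelevant, as removing finitely many terms does not affect the asymptotic.
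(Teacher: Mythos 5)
Your argument is essentially the same as the paper's: both rest on the Chebotarev density theorem for the totally split primes and the inversion $p_\ell \sim \delta\,\ell\log\ell$, the only difference being that the paper directly cites the Mertens-type estimate $\sum_{p\le X,\ \text{tot.~split}}\log p \sim X/\delta$ while you re-derive it from $\pi_{\mathcal{S}}(x)\sim \tfrac{1}{\delta}\tfrac{x}{\log x}$ via partial summation. One small slip in the write-up: the integral $\tfrac{1}{\delta}\int^{q_\ell}\tfrac{dt}{\log t}$ is asymptotic to $\tfrac{q_\ell}{\delta\log q_\ell}$, so it is not $o(q_\ell/\log q_\ell)$; what matters, and what is true, is that this quantity is $\sim\ell$, hence $o(\ell\log\ell)$ and therefore negligible against the main term $\ell\log q_\ell\sim\ell\log\ell$, so the conclusion is unaffected.
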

\begin{proof}
First, let us recall that, if $X$ is a positive integer, the totally split primes of $K/\vQ$ verify
\begin{equation}\label{eq:sumlogs}
\log{\left(\underset{ \text{totally split}}{\prod_{p\leq X}} p\right)}\sim \frac{X}{[K:\vQ]}.
\end{equation}
In addition, the Chebotarev Density Theorem ensures that the asymptotic formula for the $\ell$-th totally split prime is
\[p_\ell \sim \ell\log (\ell) [K:\vQ].\]
By setting $X=p_\ell$ in \eqref{eq:sumlogs} and relabeling the product we get
\[\log\left(\prod^{\ell}_{i=1} p_i\right)\sim \frac{\ell\log(\ell)[K:\vQ]}{[K:\vQ]}=\ell\log (\ell).\]
\end{proof}

\begin{remark}
There are secondary terms in the asymptotic formula for $p_\ell$, i.e.\ the $\ell$-th (rational) totally split prime, and these secondary terms are larger than the secondary terms in the asymptotic formula for the $\ell$-th rational prime.
\end{remark}

Let now $K$ be a number field of degree $r+1$ that is Galois over $\vQ$ (this restriction allows for easier computations in Theorem \ref{thm:goodfamilygalois}), let $\alpha\in \O_K$ be a generator, let $C_\alpha$ be the constant \eqref{eq:Cdef} and let $s\in \N$. Let $\{p_j\}_{j\in \N}$ be the strictly increasing sequence of primes that are totally split in $K/\Q$, and for every $\ell\ge 1$ let $P_\ell\coloneqq \prod\limits_{i=1}^\ell p_i$.

\begin{theorem}\label{thm:goodfamilygalois}
Let $0<c<1$, let $k\in \R^+$ be such that $k<1/\sqrt[r+1]{C_\alpha}$, and let $M_\ell\coloneqq \left\lfloor\sqrt[s+1]{k\cdot P_\ell/P_{\lfloor c\ell\rfloor}}\right\rfloor$. Let $\mathcal C_\ell\coloneqq \mathcal C\left(r,s,K,M_\ell,\{p_i\}_{i\in \{1,\dots \ell\}}\right)$. Then $\{\mathcal C_\ell\}_{\ell\ge 1}$ is an almost good family of good split codes.
\end{theorem}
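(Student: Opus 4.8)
The plan is to verify, for the family $\{\mathcal C_\ell\}_{\ell\ge 1}$, each of the defining conditions of an almost good family of good split codes, using Lemma~\ref{lemma:asymptotictotsplit} as the main analytic input. The key elementary observation is that since each $p_i$ is totally split, every prime of $\mathcal O_K$ above $p_i$ has norm $p_i$, so that
\[\prod_{i=1}^{\ell}\prod_{j=1}^{r+1} N\big(\mathfrak p^{(p_i)}_j\big)=\Big(\prod_{i=1}^{\ell}p_i\Big)^{r+1}=P_\ell^{\,r+1},\]
and the ambient space $R_{(r+1)\ell}=\prod_{i=1}^\ell\F_{p_i}^{r+1}$ has cardinality $P_\ell^{\,r+1}$. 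The single inequality that will drive everything is: since $M_\ell^{s+1}\le k\,P_\ell/P_{\lfloor c\ell\rfloor}$ and $k^{r+1}<1/C_\alpha$, one gets, for $\ell$ large enough that $M_\ell\ge 1$,
\[C_\alpha\,(M_\ell^{s+1}-1)^{r+1}<C_\alpha\,k^{r+1}\big(P_\ell/P_{\lfloor c\ell\rfloor}\big)^{r+1}<\big(P_\ell/P_{\lfloor c\ell\rfloor}\big)^{r+1}\le P_\ell^{\,r+1}.\]
The rightmost bound gives the hypothesis of Lemma~\ref{prop:codesize}, so every $\mathcal C_\ell$ is a good split code of length $(r+1)\ell$ and size $M_\ell^{r(s+1)}$; and since $P_\ell/P_{\lfloor c\ell\rfloor}=\prod_{i=\lfloor c\ell\rfloor+1}^{\ell}p_i\to\infty$ we have $M_\ell\to\infty$, so the codes are non-degenerate for $\ell$ large, which is all that matters for a $\liminf$ statement.

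For the rate condition \eqref{good1}, I would write
\[\frac{\log(\#\mathcal C_\ell)}{\log(\#R_{(r+1)\ell})}=\frac{r(s+1)\log M_\ell}{(r+1)\log P_\ell}.\]
Since $M_\ell\to\infty$, one has $\log M_\ell\sim\tfrac{1}{s+1}\big(\log P_\ell-\log P_{\lfloor c\ell\rfloor}\big)$, and Lemma~\ref{lemma:asymptotictotsplit}, applied both at $\ell$ and at $\lfloor c\ell\rfloor\to\infty$ (using $\log\lfloor c\ell\rfloor\sim\log\ell$), gives $\log P_\ell\sim\ell\log\ell$ and $\log P_{\lfloor c\ell\rfloor}\sim c\ell\log\ell$. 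Plugging in, the ratio \emph{converges} to $\tfrac{r(1-c)}{r+1}$, so the $\liminf$ equals this value, which is strictly positive because $0<c<1$ and $r\ge 1$; thus $\gamma=\tfrac{r(1-c)}{r+1}$.

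For the distance condition \eqref{good2}, I would invoke Proposition~\ref{prop:distance}: $d(\mathcal C_\ell)\ge (r+1)\ell-m_\ell+1$, where $m_\ell$ is the least size of a $T\subseteq\mathcal P$ with $\prod_{\mathfrak p\in T}N(\mathfrak p)>C_\alpha(M_\ell^{s+1}-1)^{r+1}$. Choosing $T$ to consist of all $(r+1)(\ell-\lfloor c\ell\rfloor)$ primes lying above $p_{\lfloor c\ell\rfloor+1},\dots,p_\ell$, the product of norms is exactly $\big(P_\ell/P_{\lfloor c\ell\rfloor}\big)^{r+1}$, which exceeds $C_\alpha(M_\ell^{s+1}-1)^{r+1}$ by the displayed inequality. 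Hence $m_\ell\le (r+1)(\ell-\lfloor c\ell\rfloor)$ and
\[d(\mathcal C_\ell)\ge (r+1)\ell-(r+1)(\ell-\lfloor c\ell\rfloor)+1=(r+1)\lfloor c\ell\rfloor+1\longrightarrow\infty,\]
which also shows the growth of the distance is linear in the length. Combined with locality $r$ from Proposition~\ref{prop:locality}, this finishes the proof.

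The only genuinely delicate step will be the asymptotic bookkeeping in the rate computation: one must derive $\log P_{\lfloor c\ell\rfloor}\sim c\ell\log\ell$ carefully from Lemma~\ref{lemma:asymptotictotsplit} and absorb the bounded/negligible corrections (the $\log k$ term, the $-1$, and the gap between $M_\ell^{s+1}$ and $k P_\ell/P_{\lfloor c\ell\rfloor}$ caused by the floor) into an $o(\ell\log\ell)$ error; once that is done, every other step is a direct application of results already established above.
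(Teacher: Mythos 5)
Your proposal is correct and follows essentially the same route as the paper's proof: both choose the $(r+1)(\ell-\lfloor c\ell\rfloor)$ primes above $p_{\lfloor c\ell\rfloor+1},\dots,p_\ell$ to witness the good-split-code hypothesis and the distance bound, and both apply Lemma~\ref{lemma:asymptotictotsplit} at $\ell$ and $\lfloor c\ell\rfloor$ to obtain the rate $\frac{r(1-c)}{r+1}$. Your write-up is a bit more explicit than the paper's (you spell out the distance bound $d(\mathcal C_\ell)\ge (r+1)\lfloor c\ell\rfloor+1$ and argue convergence of the rate rather than just bounding the $\liminf$ from below), but there is no substantive difference in method.
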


\begin{proof}
To prove that $\mathcal C_\ell$ is a good split code, it is enough to show that 
\[N\left(\prod\limits_{p\in \mathcal A}\prod\limits^{r+1}_{j=1} \mathfrak p_j^{(p_i)} \right)=  P_\ell^{r+1}>C_\alpha(M_\ell^{s+1}-1)^{r+1}\] 
for some set of primes $\mathcal A\subseteq \{p_1,\dots, p_\ell\}$. This also shows that the code has distance at least $n-\# \mathcal A+1$. We now show that if we choose $\mathcal A$ to be the set of all primes of $\mathcal O_K$ lying above $p_{\lfloor c\ell \rfloor+1},\ldots,p_\ell$, the hypothesis of Proposition \ref{prop:distance} are satisfied.
By multiplying both sides of the inequality $1>k\sqrt[r+1]{C_\alpha}$ by $P_\ell/P_{\lfloor c\ell \rfloor} $ we obtain that:
\[P_\ell/P_{\lfloor c\ell \rfloor}=\prod\limits_{i=\lfloor c\ell \rfloor+1}^{\ell} p_i >\sqrt[r+1]{C_\alpha} \frac{kP_\ell}{P_{\lfloor c\ell \rfloor}}.\]
By raising both sides to the $(r+1)$-th power we get that, since $N(\mathfrak p_j^{(p_i)})=p_i$ for all $j$'s,
\begin{align*}
N\left(\prod\limits_{i=\lfloor c\ell \rfloor+1}^{\ell}\prod\limits^{r+1}_{j=1} \mathfrak p_j^{(p_i)} \right)=\prod\limits_{i=\lfloor c\ell \rfloor+1}^{\ell} p_i^{r+1}> & C_\alpha \left(\frac{kP_\ell}{P_{\lfloor c\ell \rfloor}}\right)^{r+1} \\ \geq & C_{\alpha} M_\ell^{(s+1)(r+1)} \\ 
> & C_\alpha(M_\ell^{s+1}-1)^{r+1},
\end{align*}
proving both that the code is good, that the distance tends to infinity as $\ell$ grows.

Next, we need to prove that the rate of $\mathcal C_\ell$ tends to a constant greater than zero, i.e.\ \eqref{good2}. Let $R_\ell\coloneqq \prod\limits_{i=1}^\ell\F_{p_i}^{r+1}$. Then
\begin{align*}
\liminf_{\ell\to +\infty}\frac{\log{\# \mathcal C_\ell}}{\log{\# R_\ell}} &=\liminf_{\ell\to +\infty}\frac{r(s+1)\log{\left\lfloor\sqrt[s+1]{k\cdot P_\ell/P_{\lfloor c\ell\rfloor}}\right\rfloor}}{(r+1)\log{P_\ell}}\\
&\ge\liminf_{\ell\to +\infty}\frac{r(s+1)\log\left(\sqrt[s+1]{k\cdot P_\ell/P_{\lfloor c\ell\rfloor}}-1\right)}{(r+1)\log{P_\ell}}\\
&\geq \liminf_{\ell\to +\infty}\frac{r\log\left(k\cdot P_\ell/P_{\lfloor c\ell\rfloor}\right)}{(r+1)\log{P_\ell}}.
\end{align*}

Now, using the properties of logarithms and the fact that $\log(P_\ell)\sim \ell\log \ell$ thanks to Lemma \ref{lemma:asymptotictotsplit} we get
\begin{align*}
\liminf_{\ell\to +\infty}\frac{r\log\left(k\cdot P_\ell/P_{\lfloor c\ell\rfloor}\right)}{(r+1)\log{P_\ell}} &= \liminf_{\ell\to +\infty} \frac{r(\log k+\ell \log \ell-c\ell  \log c\ell)}{(r+1)\ell  \log \ell}\\
&=\frac{r(1-c)}{r+1},
\end{align*}

satisfying \eqref{good1}.
\end{proof}

\begin{remark}
Notice that the distance grows linearly in $\ell$ (which is itself proportional to length and dimension), that is a desirable code property.
\end{remark}

\section{Realization of the construction}

By the Kronecker-Weber theorem, one can always construct a Galois extension $K/\Q$ of degree $r+1$ such that $\Gal(K/\Q)$ is cyclic of order $r+1$. This guarantees that the construction is always feasible and the number of totally split places is ``large'' (as their density will be roughly asymptotic to $1/(r+1)$).

The lemma that follows provides a constructive proof for the following curious (but expected) fact, for which we could not find reference in the literature: given a positive integer $\delta$ and $n$ rational primes $p_1,\ldots,p_n$ larger than $\delta$ it is always possible to construct explicitly a number field of degree $\delta$ where $p_1,\ldots,p_n$ are all totally split. This shows that if one desires to construct a locally recoverable code over a certain fixed product of finite fields, this is in theory possible.

\begin{lemma}
Let $\delta\in \Z_{>1}$ and let $p_1,\ldots,p_n$ be distinct rational primes all larger than $\delta$. Then it is possible to explicitly construct a monic, irreducible polynomial $f(x)\in \Z[x]$ of degree $\delta$ such that if $\alpha$ is a root of $f$ then all the $p_i$'s are totally split in the number field $\Q(\alpha)$.
\end{lemma}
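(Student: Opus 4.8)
The plan is to exhibit an explicit polynomial via the Chinese Remainder Theorem on coefficients. For each prime $p_i$, since $p_i > \delta$ and $\F_{p_i}$ has more than $\delta$ elements, we can choose $\delta$ pairwise distinct residues $a_1^{(i)}, \ldots, a_\delta^{(i)} \in \F_{p_i}$ (for concreteness, take $\{0,1,\ldots,\delta-1\}$) and form the split polynomial $g_i(x) \coloneqq \prod_{k=1}^\delta (x - a_k^{(i)}) \in \F_{p_i}[x]$, which is monic of degree $\delta$ with $\delta$ distinct roots in $\F_{p_i}$, hence separable. Write $g_i(x) = x^\delta + \sum_{j=0}^{\delta-1} \overline{c_j^{(i)}} x^j$. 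Now use CRT: for each $j \in \{0,\ldots,\delta-1\}$ choose an integer $c_j \in \Z$ with $c_j \equiv c_j^{(i)} \pmod{p_i}$ for all $i$ simultaneously (possible since the $p_i$ are pairwise coprime), and set $f(x) \coloneqq x^\delta + \sum_{j=0}^{\delta-1} c_j x^j \in \Z[x]$. By construction $f(x) \bmod p_i = g_i(x)$ for every $i$.

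Next I would verify the two required properties. First, $f$ must be irreducible over $\Q$; if the naive CRT lift is reducible, I would add a suitable multiple of $p_1 p_2 \cdots p_n$ to the constant coefficient $c_0$ (this does not change $f \bmod p_i$) to force irreducibility — for instance, one can arrange that $f$ satisfies an Eisenstein condition at some auxiliary prime $q \notin \{p_1,\ldots,p_n\}$, or invoke that only finitely many lifts in the CRT-residue class can fail to be irreducible (an argument via Hilbert irreducibility or a direct Eisenstein twist at a fresh prime makes this fully explicit). Second, for each $i$, since $f \bmod p_i = g_i$ splits into $\delta$ distinct linear factors over $\F_{p_i}$, the discriminant of $g_i$ is nonzero in $\F_{p_i}$, so $p_i \nmid \operatorname{disc}(f)$ and in particular $p_i$ does not divide the index $[\O_K : \Z[\alpha]]$; then the Dedekind criterion (exactly as invoked in the proof of Proposition \ref{prop:locality}) shows that $p_i \O_K$ factors as $\prod_{k=1}^\delta \p_k^{(p_i)}$ with each $\F_{\p_k^{(p_i)}} = \F_{p_i}$, i.e.\ $p_i$ is totally split in $\Q(\alpha)$.

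The only genuine obstacle is producing an \emph{explicit} irreducible lift while keeping all the residues $f \bmod p_i = g_i$ fixed. The Eisenstein trick is the cleanest route: pick any prime $q$ distinct from all the $p_i$, and among integers congruent to the prescribed values modulo $P \coloneqq p_1 \cdots p_n$, adjust $c_0$ and (if necessary) a single other coefficient by multiples of $P$ so that $f(x)$ becomes Eisenstein at $q$ after the substitution $x \mapsto x + t$ for a suitable explicit shift $t$ — one checks by a counting argument that the congruence conditions modulo $P$ and the divisibility conditions modulo $q$ are simultaneously solvable by CRT, since $\gcd(P,q)=1$. This yields an explicit monic irreducible $f \in \Z[x]$ of degree $\delta$ with $f \bmod p_i = g_i$ for every $i$, completing the construction. (Alternatively, if one is content with ``explicit up to a finite search,'' one simply notes that $f(x) + mP$ for $m = 0, 1, 2, \ldots$ is eventually irreducible, and tests each candidate; but the Eisenstein argument avoids even this.)
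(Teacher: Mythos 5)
Your proposal takes essentially the same route as the paper: CRT on coefficients to force $f\bmod p_i=g_i$, where $g_i$ is a product of $\delta$ distinct linear factors over $\F_{p_i}$, followed by an auxiliary prime to enforce irreducibility over $\Z$, and then the Dedekind criterion to conclude total splitting. The paper packages the CRT step as a single Bézout-type linear combination $f=u_1\sum_i q_i\prod_j(x-\alpha_j^i)+u_2q_{n+1}g$, but that is the same interpolation you are doing coefficient-by-coefficient.

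The one genuinely different choice is the irreducibility mechanism, and here your write-up is fuzzy in a way worth flagging. The paper simply throws a fresh prime $p_{n+1}$ into the CRT and requires $f\equiv g\pmod{p_{n+1}}$ with $g$ monic and irreducible over $\F_{p_{n+1}}$; this is one more congruence of the same shape, nothing new to verify. Your Eisenstein variant can be made to work, but not as stated: Eisenstein at $q$ imposes $\delta$ divisibility conditions $q\mid c_j$ for all $j<\delta$ together with a mod-$q^2$ nonvanishing condition on $c_0$, so ``adjust $c_0$ and (if necessary) a single other coefficient by multiples of $P$'' is not enough, and the shift $x\mapsto x+t$ does not rescue it. The clean version of your idea is to run CRT with moduli $p_1,\dots,p_n,q^2$, setting $c_j\equiv c_j^{(i)}\pmod{p_i}$ and $c_j\equiv0\pmod{q^2}$ for $1\le j\le\delta-1$, and $c_0\equiv c_0^{(i)}\pmod{p_i}$, $c_0\equiv q\pmod{q^2}$; that gives an explicit Eisenstein polynomial at $q$ with the prescribed reductions. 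If you fix the exposition this way your proof is correct; if you prefer to match the paper more closely, replace the Eisenstein device with the simpler requirement that $f\bmod q$ be a chosen monic irreducible polynomial over $\F_q$, which avoids the $q^2$ modulus entirely. Everything else (separability of each $g_i$ implies $p_i\nmid\operatorname{disc}f$, hence Dedekind applies and $p_i$ is totally split) is correct and matches the paper.
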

\begin{proof}
For each $i\in \{1,\dots,n\}$, choose $\alpha_1^i,\ldots,\alpha_\delta^i\in \Z$ such that $\alpha_j^i\not\equiv \alpha_k^i\bmod p_i$ for every $j\neq k$ (this is possible because $p_i>\delta$). Next, choose a new prime $p_{n+1}$, different from $p_1,\ldots,p_n$, and for every $i\in \{1,\ldots,n+1\}$ let $q_i\coloneqq \prod_{j\neq i}p_j$. Notice that $q_1+\ldots+q_n$ is coprime with $q_{n+1}$, as if a prime $p$ divides $q_{n+1}$ then $p=p_i$ for some $i\in \{1,\ldots,n\}$ and hence $p$ divides $q_j$ for every $j\in \{1,\ldots,n\}\setminus \{i\}$; it follows that $p$ does not divide $q_1+\ldots+q_n$. Hence there exist $u_1,u_2\in \Z$ with $u_1u_2\ne 0$ such that $u_1(q_1+\ldots+q_n)+u_2q_{n+1}=1$. Notice that $u_1$ is coprime with $p_1\cdot\ldots\cdot p_n$ and $u_2$ is coprime with $p_{n+1}$. Now let $g(x)\in \Z[x]$ be a monic degree $\delta$ polynomial that is irreducible modulo $p_{n+1}$.

Consider then the polynomial
$$f(x)=u_1\sum_{i=1}^nq_i\prod_{j=1}^\delta(x-\alpha_j^i)+u_2q_{n+1}g(x)\in \Z[x].$$
By construction, $f(x)$ is monic. Moreover $f(x)$ is irreducible in $\Z[x]$ because $f(x)$ is irreducible modulo $p_{n+1}$ (because its reduction is $g(x) \mod p_{n+1}$, which is irreducible by construction), and hence the number field $K$ generated by a root $\alpha$ of $f$ has degree $\delta$. It remains to show that $p_1,\ldots,p_n$ are totally split in $K$. But this follows immediately from Dedekind criterion, that can be applied because none of the $p_i$'s divide the discriminant of $f$ since $f$ has no multiple roots modulo any $p_i$. It follows that the factorization pattern of $p_i$ in $\O_K$ coincides with that of $f$ modulo $p_i$; by construction this is a product of $\delta$ distinct linear terms.
\end{proof}

\section{Acknowledgements}

This work was supported in part by the National Science Foundation under Grant No 2127742.
Dorian Goldfeld is partially supported by Simons Foundation Grant Number 567168.

\bibliographystyle{plain}
\bibliography{bibliography}
\end{document}